\documentclass[letterpaper]{article}
\usepackage[numbers]{natbib}
\usepackage{geometry}
\usepackage{amsmath, amsthm, bm}
\usepackage{amssymb}
\usepackage{ctable} % for \specialrule command
\usepackage{caption}

\usepackage{hyperref}
\usepackage[english]{babel}
\usepackage[utf8]{inputenc}

%Includes "References" in the table of contents
%\usepackage[nottoc]{tocbibind}

\usepackage{color}

\newtheorem{theorem}{Theorem}

\usepackage{array,booktabs,ragged2e}
\newcolumntype{R}[1]{>{\RaggedLeft\arraybackslash}p{#1}}
\usepackage{stackengine}
\usepackage{scalerel}
\usepackage[linesnumbered,ruled,vlined]{algorithm2e}

\usepackage{multirow}
%\usepackage{float} Table position

%\usepackage[justification=centering]{caption}
%checkmark package\usepackage{tikz}
%\usepackage{tikz}
%\def\checkmark{\tikz\fill[scale=0.4](0,.35) -- (.25,0) -- (1,.7) -- (.25,.15) -- cycle;} 
%\providecommand{\keywords}[1]{\textbf{\textit{Keywords---}} #1}
%\usepackage{cite}
\usepackage[ruled,vlined]{algorithm2e}
\usepackage{graphicx}
\graphicspath{ {./images/} }
\listfiles
\usepackage[affil-it]{authblk}
\author[1]{Di Bo}
\author[1,*]{Hoon Hwangbo}
\author[1]{Vinit Sharma}
\author[1]{Corey Arndt}
\author[1]{Stephanie C.~TerMaath}

\affil[1]{University of Tennessee, Knoxville}
\affil[*]{Corresponding author: Hoon Hwangbo, hhwangb1@utk.edu}

\title{A Randomized Subspace-based Approach for Dimensionality Reduction and Important Variable Selection}

\date{}

\begin{document}

\maketitle

\begin{abstract}%   <- trailing '%' for backward compatibility of .sty file
An analysis of high-dimensional data can offer a detailed description of a system but is often challenged by the curse of dimensionality.  General dimensionality reduction techniques can alleviate such difficulty by extracting a few important features, but they are limited due to the lack of interpretability and connectivity to actual decision making associated with each physical variable. Variable selection techniques, as an alternative, can maintain the interpretability, but they often involve a greedy search that is susceptible to failure in capturing important interactions or a metaheuristic search that requires extensive computations. This research proposes a new method that produces subspaces, reduced-dimensional physical spaces, based on a randomized search and leverages an ensemble of critical subspace-based models, achieving dimensionality reduction and variable selection.  When applied to high-dimensional data collected from the failure prediction of a composite/metal hybrid structure exhibiting complex progressive damage failure under loading, the proposed method outperforms the existing and potential alternatives in prediction and important variable selection.

\end{abstract}

\smallskip
\noindent \textit{keywords:}  Subspace-based modeling, randomized algorithms, feature selection, hybrid material analysis, damage tolerance modeling

\section{Introduction}

%Importance of damage response analysis metal/composite structure.
%Description of challenges of computational model
%Difficulty in modeling a high-dimensional response surface.
%Propose subspace-based models for prediction.
%This model also identifies important subspaces that can provide insight about physical/computational experiments\\
With the recent breakthrough in computing, a real-world data analysis tends to involve voluminous high-dimensional datasets for modeling various systems with great complexity \citep{reddy2020analysis, zhu2020high}. %in which the output has complicated and nonlinear dependency on the input variables. 
An increased dimensionality can provide more information about an underlying system, but analyzing a high-dimensional dataset is not an easy task due to the curse of dimensionality \citep{liu2019variable}.  That is, for a fixed sample size, data tend to be sparser in spaces characterized by  high-dimensional as compared to low-dimensional parameters, so signals are much weaker but noises have far greater impacts, often leading to improper analysis outcomes.
%, which brings some undesired properties. 
%In this high-dimensional regime, a 
A common solution to this problem is to evaluate a data-driven model in a meaningful low-dimensional space by identifying a subset of features that can represent the entire data with a minimal loss of information, e.g., through dimensionality reduction or variable selection \citep{liu2019variable}.

Dimensionality reduction extracts inherent features from high-dimensional data and relocates the data in a low-dimensional space constructed by the features, alleviating the curse of dimensionality \citep{van2009dimensionality}. % (e.g., the difficulty in estimating the variable importance) and other undesired properties (e.g., the sparse high dimensional space)%\note{what are these? Please state} 
%of high-dimensional data analysis \cite{van2009dimensionality, jimenez1998supervised,liu2019variable}. 
%As such, the role of dimensionality reduction is crucial in many domains, such as biological engineering and material science \cite{van2009dimensionality}. %\note{Traditional dimensionality reduction techniques such as principal component analysis (PCA) and independent component analysis (ICA) are effective tools to deal with linear data. However, the linear techniques often fail to extract proper features from complex nonlinear data. In recent years, nonlinear dimensionality reduction techniques have been proposed to tackle this issue. (This part will go under Section 2, and I assume that you already have this argument there, so we can omit this part here in such a case)} 
%They are able to deal with nonlinear data. 
However, existing dimensionality reduction techniques are limited due to the lack of interpretability as the extracted features are ``artificial''  having obscure connections to the original ``physical'' variables \citep{van2009dimensionality}. %General dimensionality reduction techniques need projection, producing artificial features. 
%analyze data with artificial features because they don't have physical meanings. 
In engineering applications, identifying important physical variables is crucial for reducing the number of occasions for operational controls or experimental designs used to optimize a system of interest. %it provides some scope for the problem.
Although dimensionality reduction can offer low-dimensional evaluation of data, its limited capability restricts its broader usage for system optimization.  As an alternative, variable selection, also referred to as subset selection or feature selection, can distinguish important physical variables by extracting a subset of the original variables that are significant \citep{wei2015variable}.  Finding an exact optimal subset is computationally intractable, so variable selection typically involves a heuristic search for the subset \citep{fong2013selecting}.  A simple heuristic such as a greedy approach cannot ensure the accuracy of the final reduced model, and a complex heuristic such as genetic algorithm is computationally expensive.  In addition, while applying a heuristic search, important interactions between physical variables can easily be missed.  %In addition, some backward selection approaches begin with a full-dimensional model which are not free of the curse of dimensionality.% \note{Correlation is the relation  between two variables, either x1 vs x2 or x vs y.  Interaction is the effect on y that occurs by considering multiple x variables at the same time.  So, these two concepts are different.}%importance techniques can be used to estimate variable importance. However, most of the methods focus on individual variables and ignore the correlations among input features.

This paper develops %Our method focuses on the 
a randomized subspace-based modeling for the purpose of alleviating the curse of dimensionality and providing valuable insight about an underlying system for system optimization.  To this end, we model a regression problem as an ensemble of multiple base learners where each base learner is responsible for representing a lower-dimensional input space only; we will refer to this reduced-dimensional physical space as a subspace.  %By construction, the subspace-based modeling achieves dimensionality reduction.  
The subspace-based modeling is similar to general variable selection approaches as it finds certain subsets of the original variables to achieve dimensionality reduction but differs from them as it leverages multiple distinct subsets and accordingly multiple models to evaluate a function.  To balance between model accuracy and computational complexity, we randomly generate subspaces and selectively choose important subspaces. By construction, an important subspace informs that not only the variables therein but also the potential interactions between them are significant.  We prescribe the subspaces to be a fixed (low) dimension considering that a high-order interaction exceeding a certain order is rarely significant in practice.  By keeping all subspaces at a low dimension, functional evaluation is always performed at a low-dimensional space without a risk of suffering from the curse of dimensionality.

The potential of this method to significantly impact our characterization and understanding of advanced engineering systems is demonstrated using the challenging problem of identifying the most influential material properties on damage tolerance for a layered hybrid structure and formulating a reduced order model based on these most sensitive parameters.  This example was chosen due to the high dimensional parameter space and wide range of parameter values. Additionally, the high fidelity model used to predict damage tolerance requires a prohibitive amount of computational time to characterize just a small number of parameters in a narrow subspace of the parameter value ranges. The approach presented herein enables a novel method to rapidly reduce and characterize this vast and complex parameter space to solve a challenging physics-based structural mechanics problem.

%Only one subspace is added to the model at a time, mitigating the curse of dimensionality and unsatisfactory properties of high-dimensional data. Furthermore, it is able to capture the important physical features and the correlation among input features since our model are built on one subspace not an individual variable. The interaction among input features are automatically included. When applied to high-dimensional data collected from a composite metal development process, the new approach has a satisfactory performance in prediction and important variable selection, as shown in Section 4.

The remainder of this article is organized as follows. Section~\ref{sc:lit_review} reviews relevant studies on dimensionality reduction and important variable selection. Section~\ref{sc:subs_model} presents the proposed subspace-based method by describing model structure, randomized generation and selective extraction of subspaces, and overall learning process. Section~\ref{sc:result} demonstrates the benefits of the proposed method in comparison with others for modeling the damage tolerance of a hybrid material and discusses its potential usages for structural design, analysis, and optimization. Section~\ref{sc:conc} concludes the paper and discusses future work.

\section{Literature Review}
\label{sc:lit_review}
%Review of other papers analyzing damage response of metal/composite analysis (including Dr. Termaath's work)
%Review of other dimensionality reduction method (including PCA and kPCA)
In the past decades, dimensionality reduction has been common in many applications involving a large number of variables, including digital photographs, speech recognition, and bioinformatics \citep{van2009dimensionality}. Dimensionality reduction techniques transform high-dimensional data into a meaningful reduced-dimensional representation, ideally close to its intrinsic dimensions \citep{van2009dimensionality}. The intrinsic dimensions of data are the minimum features needed to account for the observed properties of the data \citep{fukunaga2013introduction}.   %High-dimensional data typically involves a significant amount of noises that can easily mislead to wrong conclusions, and analyzing high-dimensional data is often computationally intractable.  For this reason, applying dimensionality reduction techniques is  %\note{Dimensionality reduction techniques mitigate undesired properties of high-dimensional spaces and reduces computing time\cite{jimenez1998supervised}. (This part is redundant)}
%For dimensionality reduction, principle components analysis (PCA) \cite{partridge1998fast,salo2019dimensionality,belarbi2017pca}, independent component analysis (ICA) \cite{sompairac2019independent}, and multidimensional scaling (MDS) \cite{saeed2019state} have been used widely.  While PCA and ICA apply a linear transformation of given data, MDS utilizes a nonlinear transformation of the data. 

Traditional dimensionality reduction techniques include 
Principal Component Analysis (PCA), Independent Component Analysis (ICA), and Multidimensional Scaling (MDS). PCA is one of the most popular linear reduction techniques and dates back to Karl Pearson in 1901 \citep{pearson1901liii}. This technique tries to find orthogonal directions that account for as much variance of data as possible.  %\citet{mazlum1999interpretation} used PCA to display the information which is concealed in the quality variables observed in a water quality monitoring network. 
 %presented to aggregately measure energy security performance with a modified PCA approach. They found the existing applications of PCA in this field have some limitations. It cannot correctly handle the negative sign problem. They proposed to depend on the first eigenvector and utilize the scaled as the weight of indicator to improve the performance of PCs. 
%\citet{xu2019wind} presented a improved PCA for wind direction determination using newly shaped wind vector cell, making better use of all observations. Vectors formed different views are processed in a special matrix for PCA to generate eigenvalues which are used as weight for each view in the maximum likelihood progress. %\citet{da2009weighted} proposed a weighted PCA, which is more robust than the traditional PCA in microarray data sets. They introduced the use of a new correlation coefficient that weighs observations on PCA. The correlation coefficient is according to their importance to the data. PCA is a good application on the issues above. However, PCA suffers from the fact that each PC is a linear combination of predictors and the loadings are typically nonzero \cite{jolliffe2003modified}. Thus it is often difficult to interpret the derived PCs. 
%\citet{zou2006sparse} proposed a new approach called sparse PCA using the lasso (elastic net) to produce modified PCs with sparse loadings. These loadings can be recovered by regressing the PCs on the variables. The procedure shrinks the PCs of some variables exactly to zero, giving an implicit form of variable selection \cite{jolliffe2003modified}. 
Due to its attractive advantages of minimal information loss and generation of uncorrelated dimensions, PCA is still popular for use in a broad range of application areas \citep{zou2006sparse}.  For example, \citet{li2016evaluating} applied PCA to evaluate energy security of multiple East Asian countries with respect to vulnerability, efficiency, and sustainability.  \citet{salo2019dimensionality} proposed a network anomaly detection method based on a reduced dimensionality achieved by combining information gain and PCA. %and fed the extracted features into an ensemble classifier based on support vector machine (SVM), instance-based learning algorithms, and multilayer perceptron for the detection. %Three well-known datasets was used to examine the performance of this IG-PCA-Ensemble method. Experimental results show that this approach significantly outperforms individual methods and achieves high accuracy and low false alarm rates.  
On the other hand, ICA tries to extract independent pieces of information from high-dimensional data, mainly for the purpose of source blind separation that has been popular in signal processing \citep{hastie2009elements}. ICA can be useful in other areas of study; for example, \citet{sompairac2019independent} discussed the benefits of ICA to unravel the complexity of cancer biology, particularly in analyzing different types of omics datasets. %They determined the optimal number of components and improved the reproducibility of ICA results. 
Different from PCA and ICA, MDS is a nonlinear dimensionality reduction technique. It provides a useful graphical representation of data based on the similarity information of individual data points. MDS is a common technique for analyzing network-structured data as presented in \citet{saeed2019state} that applied MDS to wireless networks localization. %A comprehensive survey was proposed for MDS and MDS-based localization techniques in cognitive radio networks, and 5G networks. They addressed the effectiveness of MDS among the range free localization schemes.

%These traditional dimensionality reduction techniques are widely used. 
Over the recent decades, many nonlinear or nonparametric dimensionality reduction techniques have been proposed \citep{lee2007nonlinear, saul2006spectral}. Kernel PCA is a reformulation of traditional linear PCA constructing feature space through a kernel function \citep{scholkopf1998nonlinear}. \citet{choi2005fault} found PCA was inefficient and problematic for modeling a nonlinear system but kernel PCA effectively captured nonlinearity while achieving dimensionality reduction. %Furthermore, a challenging problem in nonlinear PCA, fault identification, is formulated based on a robust reconstruction error calculation. The result confirms that the new methodology affords credible fault detection and identification. It is seen that the Gaussian assumption may be violated through kernel transformation of the original data. To overcome the detect, \citet{ge2009improved} incorporated a statistical local approach into kernel PCA. This method reconstructs the variables into new source variables which approximately follow Gaussian distribution. \citet{liu2004gabor} proposed a novel Gabor-based kernel PCA method with fractional power polynomial models for face recognition. %Gabor wavelets is used to derive features and then kernel PCA is extended to include fractional power polynomial models to improve model performance. This model has been successfully tested on both frontal and pose-angled face recognition. 
\citet{xu2019software} proposed a defect prediction framework that combined kernel PCA and weighted extreme learning machine to extract representative data features and learn an effective defect prediction model.
%Researchers have developed effective nonlinear techniques for dimensionality reduction since 1980s. 
%In general, nonlinear dimensionality reduction techniques can avoid an overcrowding of feature representation where distinct clusters are represented on an overlapping area \cite{becht2019dimensionality}. %Dimensionality of data can be reduced by training a multilayer neural network with a small central layer. Gradient descent is used for fine-tuning the weights, but it works well only if the initial weights are close to the good solutions. \citet{hinton2006reducing} presented an effective way to initializing the weights allowing the deep autoencoder networks to learn low-dimensional codes that performs much better than PCA.
\citet{tenenbaum2000global} proposed an isometric feature mapping (Isomap) technique that was based on classical MDS but sought to retain the intrinsic geometry of datasets as captured in the geodesic manifold distances. %between all pairs of data points. 
It achieves the goal of nonlinear dimensionality reduction by using easily measured local metric information to learn the underlying global geometry of a data set.
\citet{hinton2006reducing} suggested using deep autoencoder networks while transforming high-dimensional data into low-dimensional codes by training a multilayer neural network with a small central layer.  They presented that the deep autoencoder networks outperformed PCA with a proper selection of initial weights.  \citet{belkina2019automated} introduced $t$-distributed stochastic neighbor embedding (t-SNE), and \citet{gisbrecht2015parametric} presented kernel t-SNE as an extension of t-SNE to a parametric framework, which enabled explicit out-of-sample extensions. They demonstrated that kernel t-SNE yielded satisfactory results for large data sets. %Comparing the performance of uniform manifold approximation with projection (UMAP) and t-SNE, \citet{becht2019dimensionality} found that UMAP provides faster run times, higher reproducibility and more meaningful representations for single-cell data.
\citet{mcinnes2018umap} proposed uniform manifold approximation and projection (UMAP) constructed by a theoretical framework based on Riemannian geometry and algebraic topology. Compared to t-SNE, UMAP arguably preserves more of the global structure. Since UMAP does not have computational restrictions on embedding dimension, it can be easily used for general dimensionality reduction.

Even with the theoretical and methodological advance of the dimensionality reduction techniques, they still cannot identify important physical variables among the existing variables; instead, they extract inherent features that are artificial. %It is not beneficial for analyzing and interpreting the data. 
However, determining important variables is a critical task in many experimental studies, including those for structural mechanics \citep{termaath2018probabilistic}, environmental science, and bioinformatics \citep{wei2015variable}.  This is because knowing important variables can suggest which variables to test and optimize for subsequent experiments to reduce the number of experimental trials and hence expedite the overall experimental procedure.  In this context, variable selection (or feature selection) that extracts a subset of existing variables is a good alternative to typical dimensionality reduction. It is capable of maintaining the original variable structure while reducing the dimensionality. 

Feature selection methods are typically classified into three groups of filter methods, wrapper methods, and embedded methods \citep{chandrashekar2014survey}.  Filter methods evaluate each variable based on some ranking criteria, such as Pearson correlation coefficient \citep{battiti1994using} or mutual information \citep{torkkola2003feature}.  Wrapper methods, evaluating different subsets of variables by using a learning algorithm, generally provide better performance compared to filter methods \citep{xue2015survey}.  To determine the subsets of variables subject to evaluation, an exhaustive search (rarely used) requires considering $2^n$ different feature combinations when there are $n$ variables, which is impractical.  Instead, sequential forward selection \citep{whitney1971direct} and sequential backward selection \citep{marill1963effectiveness} algorithms that apply greedy search have been used broadly.  More recently, sequential floating forward selection \citep{pudil1994floating} and adaptive sequential floating forward selection \citep{somol1999adaptive} have been proposed to alleviate poor performance of the greedy search.  However, these sequential selection methods still suffer from a nesting problem caused by the nature of the greedy search, so the selection outcome is generally far from the optimum.  To overcome the nesting problem, heuristic search methods, e.g., based on genetic algorithm \citep{alexandridis2005two}, have been proposed.  These wrapper methods, however, require an extensive amount of computation and are prone to overfitting \citep{chandrashekar2014survey}.  Embedded methods aim to alleviate the computational cost of wrapper methods by integrating feature selection and model learning into a single process \citep{xue2015survey}.  In other words, some ranking criteria, such as max-relevancy min-redundancy \citep{peng2005feature} and the weights of a classifier \citep{mundra2009svm}, are used for the initial feature reduction, and then wrapper methods are applied for the final feature selection.  %One of the most important aspect to consider for feature selection is the stability of the selection outcome \cite{chandrashekar2014survey,xue2015survey}.  In general, maintaining the consistency of the selected feature subset is extremely challenging when there are some changes in the training dataset (new addition or deletion).

These days, machine learning methods, such as random forest (RF) \citep{chen2020selecting} and neural network (NN) \citep{olden2004accurate}, have been used to measure variable importance and rank variables based on model's prediction accuracy. %RF has been referred to as a useful tool to handle the feature selection for high-dimensional datasets \cite{chen2020selecting}.
%\citet{archer2008empirical} measures the effectiveness of RF variable importance analysis for each predictor. They concluded that RF is an effective tool for obtaining variable importance on a microarray data set.
%\citet{genuer2010variable} provided some experimental insights about the behavior of the variable importance index based on random forests. They presented a novel approach to rank explanatory variables using the random forests score of importance and a stepwise ascending variable introduction strategy. \citet{strobl2008conditional} developed a conditional permutation scheme for the computation of the variable importance measure. To handle a bias towards correlated predictors, a preference for choosing correlated predictors in the tree building process were added in their approach. The conditional variable importance is more reliably than the original marginal approach since it reflects the true impact of each predictor. 
\citet{gregorutti2017correlation} explored the usage of RF in the presence of correlated predictors. Their results motivated the use of the recursive feature elimination (RFE) algorithm using permutation importance measure as a ranking criterion. \citet{chen2020selecting} compared results from three popular datasets (bank marketing, car evaluation database, human activity recognition using smartphones) with and without feature selection based on multiple method, including RF and RFE. %For RF method, the variable importance is based on the recorded prediction accuracy on the out-of-bag portion of the data. 
The experimental results demonstrated that RF achieved the best performance in all experimental groups. %These experiments proved important feature selection is essential for classification problem. Furthermore, the best feature selection method is RF method compared to Boruta and REF. NN have received great attention in recent years. However, they are labeled as a "black box" since they provide little explanatory insight into the contributions of input variables in the prediction process \cite{olden2004accurate}. 
On the other hand, \citet{olden2004accurate} compared nine variants of artificial neural network (ANN) for quantifying variable importance using simulated data. For this comparison, the true importance of variables was known and used for verification. Their results showed connection weight approach (CWA) provided the best accuracy and CWA was the only method that correctly identified the rank of variable importance. %The other methods can only identified the first few important variables or no variables at all. %Garson's Algorithm was the most poorest performing approach, yet is the most popular method in the ecological literature. 
\citet{liu2019variable} presented a deep Baysian rectified linear unit (ReLU) network for high-dimensional regression and variable selection. The result showed their method outperformed existing classic and other NN-based variable selection methods.  %Baysian inference provides easy access to the model uncertainty about the individual variables and a complete view of the dependency structure among all input covariates.
Although these variable selection methods provide an efficient tool to extract important variables, they primarily focus on the importance of individual variables, with little consideration for the importance of feature interactions.  In a general experimental problem, however, knowledge about the significance of feature interactions is a critical factor for designing experiments.

%Due to the curse of high dimension, it is a big challenge to deal with high-dimensional dataset. Traditional dimensionality reduction techniques usually utilize artificial features, leading to the lack of interpretation and system analysis.
%In high-dimensional regression or classification frameworks, variable selection is a difficult task, that becomes even more challenging in the presence of highly correlated predictors \cite{gregorutti2017correlation}. Most of the methods mentioned above can deal with high-dimensional data and find the important individual variables. They don't consider the interaction among input features. 
Instead of finding important individual variables, our approach aims to identify critical subspaces, each of which presents which variable combination as a whole (including interactions) is important.  For the exploration of potentially important subspaces, a randomized search is employed to improve model accuracy and computational efficiency relative to a greedy search and a metaheuristic search, respectively.   Furthermore, our method leverages multiple subsets of variables (subspaces) for model construction instead of relying on a single subset as the existing methods do, which can provide a more flexible model.  The details of the proposed method will be discussed in the subsequent sections.
%\section{Computational Model}
%Dr. Termaath's summary of the model

\section{Subspace-Based Modeling and Critical Subspaces}
\label{sc:subs_model}
%Describe the entire structure (random selection of subspaces, kPCA, linear model, selection and termination criteria).
%\hat{y}  \hat{f}
In this section, we develop randomized subspace-based modeling for solving a general supervised learning problem and identifying critical variables and interactions.  The dataset of interest contains data pairs of $\{\mathbf{x}_i, y_i\}_{i=1}^n$ where $\mathbf{x}_i$ is a $p$-dimensional input vector and $y_i$ is the corresponding response. For subspace-based modeling, we form a subspace, a space spanned by a subvector of input $\mathbf{x}$, determine the significance of a subspace for modeling response, and use such a critical subspace as a basic unit for model building.  A critical subspace implies that the variables forming the space and their interactions are important, so it naturally extracts important variables and interactions.  In the following sections, we describe the proposed model structure, the randomized search for generating subspaces, the significance evaluation of a subspace, and the overall model learning process.

\subsection{Subspace-based model}
%Add flow chart. Explain about the entire iterative procedure.Subspaces for reduced dimensionality
Considering a general supervised learning problem, we estimate a function $f:\mathbb{R}^p \rightarrow \mathbb{R}$ that relates $p$-dimensional input $\mathbf{x}$ and output $y$ as $y=f(\mathbf{x})+\varepsilon$ where $\varepsilon$ is an additive noise. %Our model is an aggregate of individual subspace modeling. It utilizes the structure of generalized additive model. Multiple variables are as one subspace.
%The generalized additive model has the form\cite{hastie2009elements}
%\begin{align}
%E(Y|X_1,X_2, ... , X_p) = \alpha + f_1(x_1) + f_2(X_2) +...+ f_p(X_p)
%\end{align}
We model the unknown function $f$ as an additive mixture of subfunctions $g_j:\mathbb{R}^k \rightarrow \mathbb{R}$ for $j=1,\ldots, J$ defined in a lower dimensional space of dimension $k\ll p$, instead of estimating the full-dimensional function directly, as
%As usual \(X_1, X_2, ..., X_p\) are predictors and \(Y\) represents the response variable; the \(f_p\)'s are unspecified smooth functions. Since our model are based on different subspaces, the expression can be converted to
\begin{equation}
y = f(\mathbf{x}) + \varepsilon \approx g_1(\mathbf{z}_1) + g_2(\mathbf{z}_2) + ... + g_J(\mathbf{z}_J) + \varepsilon,
\label{eq:model}
\end{equation}
where $\mathbf{z}_j$ is the $j$th subvector of $\mathbf{x}$ of size $k$.  Each $\mathbf{z}_j$ for $j=1,\ldots,J$ takes different components of $\mathbf{x}$.  For example, suppose $k=3$ and $\mathbf{x} = \begin{bmatrix}
x_1, x_2, \ldots, x_p
\end{bmatrix}^T
$ where $p > 20$.
Then, we may have $\mathbf{z}_1 = \begin{bmatrix}
x_3, x_8, x_{12}
\end{bmatrix}^T$ and $\mathbf{z}_2 = \begin{bmatrix}
x_7, x_{11}, x_{20}
\end{bmatrix}^T$.
%\begin{align}
%    \bm{\Phi} &= \\
%    &= 23426 \nonumber
%\end{align}
We define a reduced-dimensional space spanned by each subvector $\mathbf{z}_j$ as a subspace, and $g_j$ estimates the response $y$ within a subspace formed by $\mathbf{z}_j$.  As such, in Eq.~\eqref{eq:model}, we model the function $f$ as a mixture of subspace-based models.

The advantage of the subspace-based modeling is obvious. By evaluating $y$ in low-dimensional subspaces, there is no risk of suffering from the curse of dimensionality, which is not true when evaluating $y$ in a single high-dimensional space.  Different from general dimensionality reduction methods, the subspace-based modeling does not require extracting artificial dimensions, but the reduction of dimensionality is achieved by forming low-dimensional physical spaces, which can provide physical interpretation of the model.  One major shortcoming of the subspace-based modeling is that it is not capable of modeling interactions of an order higher than $k$. However, in many real-world problems, a high-order interaction is almost negligible in modeling response.  From a preliminary study, we found that $k=3$, i.e., modeling up to 3-factor interactions, provided good prediction results.

The key to the success in the subspace-based modeling lies in how to form the subspaces, i.e., how to generate the subvectors of $\mathbf{z}_j$ for $j=1,\ldots,J$.  For an exhaustive search, if we assume $p=41$ and $k=3$, the number of all subspace candidates is ${41 \choose 3} = 10660$.  Evaluating models for this many subspaces and determining whether to include each of them requires a considerable amount of computation (still, much smaller compared to $2^{41}$ for an exhaustive search for the optimal subset).  On the other hand, simple feature selection techniques, such as greedy forward selection and greedy backward selection, are limited in terms of the exploration capability.  In the next section, we discuss how to generate subspaces and how to extract critical subspaces.
%Since the model is based on individual subspace learning, the dimensionality is reduced. Instead of using all predictors at a time, only one subspace are added to the model. The dimensionality of a subspace is far less than the number of predictors. The function \( g_i\) is support vector regression (SVR) in our paper.

\subsection{Subspace generation and extraction}
%The whole model includes two 5-fold cross-validations. The first cross-validation is used for important variables selection and prove the effectiveness of our model. The dataset is divided into two groups. One group is 80\% of  data (train and validation dataset), and the other group is 20\% of data (test dataset).  The error estimation is averaged over all five trials to get total effectiveness of our model. The second cross-validation is used for the critical subspace selection. 
%As shown in the following figure, the whole data will split into 80\% of the samples (train and validation dataset) and 20\% of the samples (test dataset). 
%Now, the second 5-fold cross-validation is employed. 80\% of whole data is split into train and validation, accounting for 80\% and 20\% respectively. Then, a subspace is randomly chosen. Five SVR models are built using this subspace. Based on the five sets of different train and validation datasets, we can build five models. The final prediction error is the average of five prediction errors. Root mean square error (RMSE) is used as performance metric. 

To explore a broad area covering potential subspace configurations and not to rely on a greedy search, we generate subspaces randomly.  In particular, at each draw, we randomly choose $k$ variables out of $p$ variables (full-dimension) and evaluate whether to include this randomly generated subspace into the model or not.  For each random draw, we apply sampling without replacement, but we allow duplicated selection of a single variable at multiple draws.  In other words, an input variable $x_1$ could be a part of subspace $\mathbf{z}_1$, and the variable can be chosen again at later draws for $\mathbf{z}_j$ for $j>1$.  This is to ensure that multiple interactions associated with a key variable are not lost and all of them can be used for modeling the response as long as they have significant impacts.

We determine the significance of a randomly generated subspace by evaluating the percentage reduction in the prediction error with respect to root mean square error (RMSE).  To evaluate out-of-sample prediction error and avoid possible overfitting, we apply 5-fold cross validation (CV) to calculate the error.  
This 5-fold CV is applied to the dataset assigned for model learning (excluding testing portion), and the dataset is split into 5 small subsets of the data.  Each subset can serve as a validation dataset whereas the remaining four subsets collectively can be used to train a model.  In this way, we can generate 5 different train/validation combinations of data, for each of which train and validation portions account for 80\% and 20\% of the dataset, respectively, as shown in Fig.~\ref{fig:extraction}. 
\begin{figure}[b!]
\centering
\includegraphics[width=\textwidth]{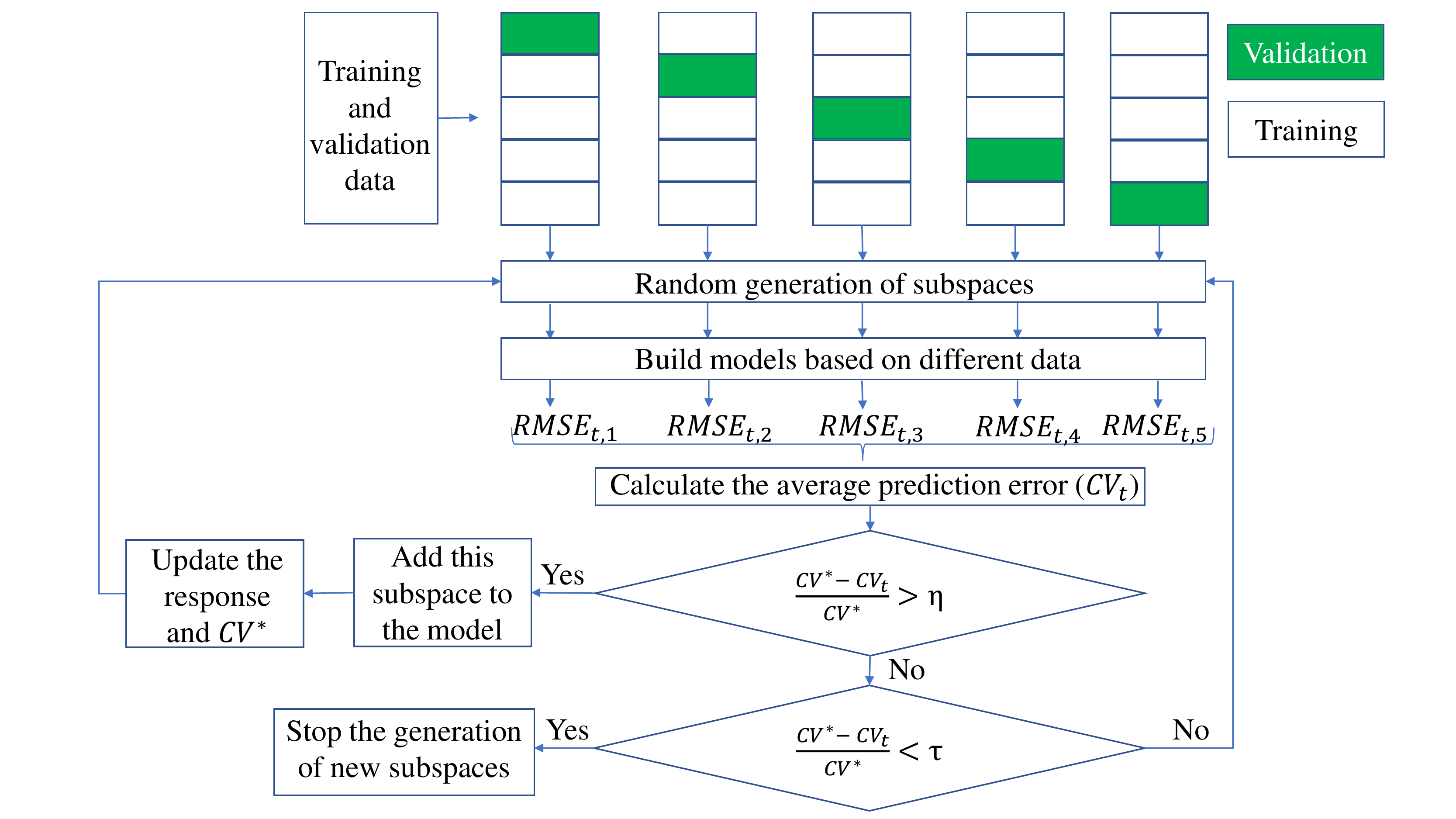}
\caption{A flow chart describing the process of subspace generation and extraction }
\label{fig:extraction}
\end{figure}

To test the significance, a temporary subspace-based model ($g_t$) for a randomly generated subspace ($\mathbf{z}_t$) will be added into the model, and the prediction error of this extended model will be evaluated in terms of the out-of-sample RMSE.  The temporary subspace-based model, $g_t$ where $t$ denotes the current iteration number for the subspace generation, is learned for each of the five different train datasets, and the prediction error of the extended model including $g_t$ is calculated by using the corresponding five validation datasets.  This will produce $RMSE_{t,q}$ for $q=1,\ldots,5$ for each train/validation split at iteration $t$. To be specific, the RMSE is calculated as
\begin{equation}
    RMSE_{t,q} = \sqrt{\frac{1}{n_{v_q}}\sum_{\{\mathbf{x}_i, y_i\} \in \mathcal{V}_q}{(y_i - \tilde{f}_t(\mathbf{x}_i))^2}}
\end{equation}
where $\mathcal{V}_q$ is the validation dataset for the $q$th train/validation split and $n_{v_q}$ is the number of observations in the validation dataset.  $\tilde{f}_t(\mathbf{x}_i) = \sum_{j \in \mathcal{J}^*} g_j(\mathbf{z}_j) + g_t(\mathbf{z}_t)$ where $\mathcal{J}^*$ is the set of indices for the selected critical subspaces until the previous iteration.
The CV score is then the average of the five prediction errors, i.e.,
\begin{equation}
 CV_t =\frac{1}{5} \sum_{q=1}^{5}  RMSE_{t,q}.
 \label{eq:cv}
\end{equation}
%where \(RMSE_{i}^{'} \) is the RMSE of each fold (train datasets to train models and validation dataset to calculate RMSE. 

We use the percentage reduction of the prediction error for the selection of a critical subspace and also for the termination of the iterative search for new subspaces.  The percentage error reduction is defined as
\begin{equation}
\Delta e = \frac{CV^* - CV_t}{CV^*} 
\label{eq:epr}
\end{equation}
where $CV^*$ is the minimum (best) CV score in Eq.~\eqref{eq:cv} obtained until the previous iteration.
The initial value of $CV^*$ is calculated from a model that only includes a constant term (the average of the five sample means of $y_i$'s in each training dataset).
%If the latter prediction error is larger than the previous one, a new subspace will be chosen through randomized search. On the contrary, the percentage reduction of prediction error is calculated. The formula is 
For the selection of a critical subspace, we set a selection threshold $\eta$, and if \(\Delta e>\eta\), the subspace generated at the $t$th iteration is accepted and included in the model.  This means that a subspace that reduces the prediction error at least by a certain percentage will be included in the model.  Otherwise, this subspace is disregarded.  
%If the current subspace meets the selection criterion, update the present prediction error to the previous prediction error.
%\begin{align}
% RMSE_{i-1} \leftarrow RMSE_{i}
%\end{align}
%The response is assigned to the model's residual. 
%\begin{align}
% y-\hat{y}   \leftarrow  y
%\end{align}
%where \(\hat{y}\) is the residual of the model based on the train dataset.
The search for a new subspace continues until the percentage reduction \(\Delta e\) becomes less than a termination threshold, \(\tau\).  In other words, if \(\Delta e<\tau\), the subspace searching process terminates.  If none of the selection criterion and the termination criterion are met, the iterative search continues for the next possible critical subspace. %This is the termination criterion. 
Fig.~\ref{fig:extraction} illustrates the overall process of the subspace generation and extraction.

\subsection{Subspace model learning and hyperparameter selection} \label{ssc:hyppar}
%When it comes to this termination criterion, a model is built using 80\% of the data which only includes the selected and stored subspaces. One subspace is added at a time to our model until all selected subspaces are included in the model, which is a generalized additive model structure. After adding one subspace, the response is updated to the model's residual to improve the prediction ability. The prediction error is the RMSE between 80\% of the data and 20\% of the data. %Because of the first 5-fold cross-validation, five test errors are gotten here. The average of the five test errors is the final test error. 
This section describes how to estimate each subspace-based model $g_t$.  For this estimation, we use a randomly generated subvector, $\mathbf{z}_t$ as predictors and the current residual calculated before the iteration $t$ as a response to estimate, i.e., using $\tilde{y}_{t,i} = y_i - \sum_{j \in \mathcal{J}^*} g_j(\mathbf{z}_j)$ for $i=1,\ldots,n$ for response. In other words, we consider a regression problem written as
\begin{equation}
    \tilde{y}_{t,i} = g_t(\mathbf{z}_{t,i}) + \tilde{\varepsilon}_{t,i}
    \label{eq:resid_model}
\end{equation}
where $\mathbf{z}_{t,i}$ is a subvector of $\mathbf{x}_{i}$, and $\tilde{\varepsilon}_{t,i}$ is a modified noise for $i=1,\ldots,n$.

To learn the function $g_t$, we use support vector machine (SVM) as a base learner. It is a supervised learning method which can be used for classification and regression. For a regression analysis, it is also known as support vector regression (SVR).  To train a model based on SVR, we use ``$\epsilon$-insensitive'' error function which is a symmetric loss function. %This function equally penalizes high and low misspecification \cite{awad2015support}, and 
For this function, a flexible tube with the radius of $\epsilon$ is formed symmetrically around the estimate.  Errors of size less than \(\epsilon\) are ignored, and overestimation and underestimation are penalized equally.  This means that some penalty is assigned only to the points outside the tube based on the distance between the points and the tube, but no penalty is applied to the points within the tube \citep{awad2015support}. %P435

For SVR, the function $g_t$ can be expressed as a linear combination of basis functions, $h_m$ for $m=1,\ldots,M$, as
\begin{equation}
    g_t(\mathbf{z}_t) = \sum_{m=1}^M{\beta_m h_m(\mathbf{z}_t)} + \beta_0,
\end{equation}
where $\beta_m$ for $m=1,\ldots,M$ is a coefficient of each basis function and $\beta_0$ is a constant.  To estimate $g_t$, we minimize
\begin{equation}
\begin{split}
    H(\bm{\beta}, \beta_0) &= \sum_{i=1}^n{V_\epsilon(\tilde{y}_{t,i} - \hat{g}_t(\mathbf{z}_{t,i}))} + \frac{\lambda}{2}\sum_{m=1}^M{\beta_m^2}  \\
    &\text{where} \; V_\epsilon(r) = \begin{cases}
    0, & \text{if} \; |r|<\epsilon, \\
    |r| - \epsilon, & \text{otherwise}.
    \end{cases}
\end{split} \label{eq:obj}
\end{equation}
where $\bm{\beta} = \begin{bmatrix}
\beta_1, \ldots, \beta_M
\end{bmatrix}^T$ is a coefficient vector, and $\lambda$ is a regularization parameter.  The minimizer of Eq.~\eqref{eq:obj} provides the estimate of $g_t$ in the form of
\begin{equation}
    \hat{g}_t(\mathbf{z}) = \sum_{i=1}^n{\alpha_i}K_t(\mathbf{z}, \mathbf{z}_{t,i}). 
\end{equation}
where $\alpha_i$ for $i=1,\ldots,n$ is a Lagrangian dual variable for the Lagrangian primal shown in Eq.~\eqref{eq:obj}, and $K_t(\cdot,\cdot)$ is a kernel function that represents the inner product of the unknown basis functions, $h_m$ for $m=1,\ldots,M$ \citep{hastie2009elements}.
%If \(\hat{\beta}, \hat{\beta}_0 \) are the minimizers of loss function, the solution function can be shown to have the form\cite{hastie2009elements},
%\begin{align}
%\hat{\beta} = \sum_{i=1}^{N} (\hat{\alpha}^*_{i} - \alpha)x_{i}
%\end{align}
%\begin{align}
%\hat{f}(x) = \sum_{i=1}^{N} (\hat{\alpha}^*_{i} - \hat{\alpha}_i)\langle x, x_{i}\rangle +\beta_0 
%\end{align}
%where \(\hat{\alpha}^*_{i}, \hat{\alpha}_i\) are positive and solve the quadratic programming problem

%\begin{align}
%\displaystyle \min_{\alpha,\alpha^*} \:\, \frac{1}{2}(\alpha-\alpha^*)^TQ(\alpha-\alpha^*) + \epsilon\sum_{i=1}^{l}(\alpha_i + \alpha_i ^*) + \sum_{i=1}^{l}y_i(\alpha_i - \alpha_i ^*)
%\displaystyle \min_{\alpha_{i},\alpha^{*}_{i}} \:\, \epsilon \sum_{i=1}^{N}(\alpha_i ^* + \alpha_i )- \sum_{i=1}^{N}y_i(\alpha_i ^* - \alpha_i ) + \frac{1}{2} \sum_{i, i^{'}=1}^{N} (\alpha_i ^* - \alpha_i )(\alpha _{i^{'}} ^{*} - \alpha_{i^{'}}) \langle x_i, x_{i^{'}}\rangle
%\end{align}
%subject to the constraints
%\begin{align}
%0 \leq \alpha_i,\,\alpha_i ^* \leq 1/\lambda \nonumber \nonumber\\
%\sum_{i=1}^{N}(\alpha_i ^* - \alpha_i ) = 0 \\
%\alpha_i ^* \alpha_i  = 0 \nonumber
%s.t.\,\,\,\,\, 0 \leq \alpha_i, \alpha_i ^* \leq C, \,\,\, i = 1,2,...,l\ %\nonumber\\
%\sum_{i=1}^{l}(\alpha_i - \alpha_i ^*) =0\ \nonumber
% \end{align}
%where \(\epsilon \) is a parameter of the loss function, and \(\lambda\) is a traditional regularization parameter.

To fully specify the estimate of $g_t$, some hyperparameters need to be determined.  This includes the regularization parameter, $\lambda$ (related to the cost parameter in a typical SVM), the radius of the tube, $\epsilon$, for the loss function, and some kernel related parameters.  For this hyperparameter learning, we apply grid search based on generalized cross-validation (GCV) criterion. The proposed method already employs 5-fold CV for the selection of critical subspaces and the termination of the iterative search, so another layer of out-of-sample prediction will complicate the data structure, and the training process may suffer from the lack of a sufficient number of data points.  While measuring in-sample error using training data only, GCV provides a convenient approximation to the leave-one-out CV \citep{hastie2009elements}, so it is a proper criterion for the hyperparameter learning of the proposed method. In general, GCV is calculated as
\begin{equation}
GCV(\hat{f}) = \frac{1}{n}\sum_{i=1}^{n} \Big[\frac{y_i- \hat{f}(\mathbf{x}_i)}{1-\text{trace}(\mathbf{S})/n)}\Big]^2
\label{eq:gcv}
\end{equation}
%P244 (7.52). 
where $\mathbf{S}$ is an $n\times n$ hat matrix that performs a linear projection of $\mathbf{y}=\begin{bmatrix}
y_1, \ldots, y_n
\end{bmatrix}^T$ to achieve the estimate of $\mathbf{y}$, i.e., $\hat{\mathbf{y}} = \mathbf{S y}$. %matrix depending on the input vectors $\mathbf{x}_i$ but not on the $y_i$ and $\hat{f}(x_i)$ is the predicted value. The quantity trace(\textbf{S}) is the effective number of parameters, the sum of the diagonal elements of \textbf{S}.
For the proposed subspace-based model in Eq.~\eqref{eq:model}, deriving this expression of linear projection is not straightforward.  Theorem~\ref{thm:trace} shows a good approximation of $\mathbf{S}$ that can be derived under the assumption of a squared loss function.
\begin{theorem}
Assuming a squared loss function, i.e., $V_\epsilon(r) = r^2$, the estimate of the response $\mathbf{y}$ can be expressed as $\hat{\mathbf{y}} = \mathbf{S y} = (\sum_{j=1}^J{\mathbf{S}_j})\mathbf{y}$ where $\mathbf{S}_j = \mathbf{S}_j^\prime (\mathbf{I} - \sum_{l=0}^{j-1}{\mathbf{S}_l})$, $\mathbf{S}_j^\prime = \mathbf{K}_j(\mathbf{K}_j + \lambda \mathbf{I})^{-1}$, $\mathbf{S}_0 = \mathbf{0}$, $\mathbf{I}$ is an $n\times n$ identity matrix, and $\mathbf{K}_j$ is an $n\times n$ kernel matrix with $\{\mathbf{K}_j\}_{i, i^\prime} = K_j(\mathbf{z}_{j,i}, \mathbf{z}_{j,i^\prime})$ for $i, i^\prime =1,\ldots,n$.
\label{thm:trace}
\end{theorem}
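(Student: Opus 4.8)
The plan is to observe that once the $\epsilon$-insensitive loss is replaced by the squared loss $V_\epsilon(r)=r^2$, the minimization in Eq.~\eqref{eq:obj} for each subfunction $g_t$ becomes exactly a kernel ridge regression applied to the stage-$t$ residual, so the whole stagewise procedure of Section~\ref{ssc:hyppar} reduces to backfitting with linear smoothers. I would organize the argument in three steps: (i) a closed form for a single subspace fit, (ii) tracking the residual through the stages by induction, and (iii) summing.

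First I would derive the single-stage smoother. Fix a stage $j$ and let $\mathbf{r}$ denote the response handed to $g_j$, i.e.\ the residual $\tilde{\mathbf{y}}_j=\bigl[\tilde y_{j,1},\ldots,\tilde y_{j,n}\bigr]^T$ from Eq.~\eqref{eq:resid_model}. By the representer theorem the minimizer has the stated form $\hat g_j(\mathbf{z})=\sum_{i=1}^n\alpha_i K_j(\mathbf{z},\mathbf{z}_{j,i})$, so its values on the training inputs are $\hat{\mathbf{g}}_j=\mathbf{K}_j\boldsymbol\alpha$ and the penalty satisfies $\sum_m\beta_m^2=\boldsymbol\alpha^T\mathbf{K}_j\boldsymbol\alpha$. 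Substituting into the squared-loss version of $H$ gives, up to the customary normalization of $\lambda$, the objective $\|\mathbf{r}-\mathbf{K}_j\boldsymbol\alpha\|^2+\lambda\,\boldsymbol\alpha^T\mathbf{K}_j\boldsymbol\alpha$; setting its gradient to zero yields $\hat{\boldsymbol\alpha}=(\mathbf{K}_j+\lambda\mathbf{I})^{-1}\mathbf{r}$, hence $\hat{\mathbf{g}}_j=\mathbf{K}_j(\mathbf{K}_j+\lambda\mathbf{I})^{-1}\mathbf{r}=\mathbf{S}_j^\prime\mathbf{r}$. This identifies $\mathbf{S}_j^\prime$ as the linear smoother mapping any response it is given to its fitted vector.

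Next I would propagate the residual through the stagewise fit. By construction $g_j$ is fit to $\tilde{\mathbf{y}}_j=\mathbf{y}-\sum_{l<j}\hat{\mathbf{g}}_l$, with $\tilde{\mathbf{y}}_1=\mathbf{y}$. I claim by induction on $j$ that $\hat{\mathbf{g}}_j=\mathbf{S}_j\mathbf{y}$ with $\mathbf{S}_j=\mathbf{S}_j^\prime(\mathbf{I}-\sum_{l=0}^{j-1}\mathbf{S}_l)$. The base case $j=1$ holds since $\mathbf{S}_0=\mathbf{0}$ forces $\mathbf{S}_1=\mathbf{S}_1^\prime$ and, by Step~(i), $\hat{\mathbf{g}}_1=\mathbf{S}_1^\prime\mathbf{y}$. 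For the inductive step, the hypothesis gives $\tilde{\mathbf{y}}_j=\mathbf{y}-\sum_{l=1}^{j-1}\mathbf{S}_l\mathbf{y}=(\mathbf{I}-\sum_{l=0}^{j-1}\mathbf{S}_l)\mathbf{y}$, and applying Step~(i) to this residual yields $\hat{\mathbf{g}}_j=\mathbf{S}_j^\prime\tilde{\mathbf{y}}_j=\mathbf{S}_j^\prime(\mathbf{I}-\sum_{l=0}^{j-1}\mathbf{S}_l)\mathbf{y}=\mathbf{S}_j\mathbf{y}$, closing the induction. Summing over stages then gives $\hat{\mathbf{y}}=\sum_{j=1}^J\hat{\mathbf{g}}_j=\bigl(\sum_{j=1}^J\mathbf{S}_j\bigr)\mathbf{y}=\mathbf{S}\mathbf{y}$, which is the asserted identity.

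The only genuinely delicate point is Step~(i): justifying the kernel-ridge closed form, and in particular handling the unpenalized intercept $\beta_0$, which is not regularized in Eq.~\eqref{eq:obj} and therefore does not enter the clean form $\mathbf{S}_j^\prime=\mathbf{K}_j(\mathbf{K}_j+\lambda\mathbf{I})^{-1}$ directly. I would dispose of this by noting that centering the residual $\tilde{\mathbf{y}}_j$ at each stage (or, equivalently, appending a constant feature to the kernel) makes the stated form exact; since the residuals are of mean-zero type by construction after the initial constant fit, this is innocuous, and the resulting $\mathbf{S}$ is exactly the approximation plugged into the $\operatorname{trace}(\mathbf{S})$ term of the GCV formula~\eqref{eq:gcv}. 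Everything else is bookkeeping with the telescoping residual.
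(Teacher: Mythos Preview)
Your proposal is correct and follows essentially the same route as the paper: identify each stage under squared loss as kernel ridge regression with smoother $\mathbf{S}_j^\prime=\mathbf{K}_j(\mathbf{K}_j+\lambda\mathbf{I})^{-1}$, then unwind the stagewise residual $\tilde{\mathbf{y}}_j=(\mathbf{I}-\sum_{l<j}\mathbf{S}_l)\mathbf{y}$ to obtain $\mathbf{S}_j$ and sum. The only differences are cosmetic---you derive the closed form via the representer theorem where the paper simply cites \citet{hastie2009elements}, and you add a remark about the unpenalized intercept $\beta_0$ that the paper does not discuss.
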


\begin{proof}
From Eq.~\eqref{eq:model}, we define $\mathbf{S}_j \mathbf{y}$ for $j=1,\ldots,J$ to represent $g_j(\mathbf{z}_{j,i})$ as a linear combination of $y_i$'s so that we have
\begin{equation}
    \hat{\mathbf{y}} = \mathbf{S}_1 \mathbf{y} + \mathbf{S}_2 \mathbf{y} + \cdots + \mathbf{S}_J \mathbf{y}.
\end{equation}
Since each $g_j$ estimates residuals from a model including up to the previous subspace-based model ($g_{j-1}$), we have $\tilde{y}_{j,i} = g_j(\mathbf{z}_{j,i}) + \tilde{\varepsilon}_{j,i}$ (as in Eq.~\eqref{eq:resid_model}).  For an SVR with a squared loss function, the estimation can be expressed as $\widehat{\tilde{\mathbf{y}}}_j=\mathbf{S}_j^\prime \tilde{\mathbf{y}}_j = \mathbf{K}_j (\mathbf{K}_j + \lambda \mathbf{I})^{-1} \tilde{\mathbf{y}}_j$ (see \citet{hastie2009elements}) where $\widehat{\tilde{\mathbf{y}}}_j$ is the estimate of $\tilde{\mathbf{y}}_j$, a vector including $\tilde{y}_{j,i}$ for $\forall i$, and $\{\mathbf{K}_j\}_{i, i^\prime} = K_j(\mathbf{z}_{j,i}, \mathbf{z}_{j,i^\prime})$ for $i, i^\prime =1,\ldots,n$ for a given kernel $\mathbf{K}_j$.  Since $\tilde{\mathbf{y}}_j = \mathbf{y} - \mathbf{S}_1 \mathbf{y} - \mathbf{S}_2 \mathbf{y} - \cdots - \mathbf{S}_{j-1} \mathbf{y}$,
\begin{equation}
    \mathbf{S}_j \mathbf{y} := \widehat{\tilde{\mathbf{y}}}_j=\mathbf{S}_j^\prime \tilde{\mathbf{y}}_j = \mathbf{S}_j^\prime (\mathbf{y} - \mathbf{S}_1 \mathbf{y} - \mathbf{S}_2 \mathbf{y} - \cdots - \mathbf{S}_{j-1} \mathbf{y}) = \mathbf{S}_j^\prime (\mathbf{I} - \sum_{l=1}^{j-1}{\mathbf{S}_l})\mathbf{y} = \mathbf{S}_j^\prime (\mathbf{I} - \sum_{l=0}^{j-1}{\mathbf{S}_l})\mathbf{y}.
\end{equation}
Now, for $j=1$, the response to estimate is $\tilde{\mathbf{y}}_1 = \mathbf{y}$, so we have
\begin{equation}
    \mathbf{S}_1 \mathbf{y} := \widehat{\tilde{\mathbf{y}}}_1=\mathbf{S}_1^\prime \tilde{\mathbf{y}}_j = \mathbf{S}_1^\prime \mathbf{y} = \mathbf{S}_1^\prime (\mathbf{I} - {\mathbf{S}_0})\mathbf{y}.
\end{equation}
where $\mathbf{S}_0$ is an $n\times n$ zero matrix.  Therefore, the result holds for $\forall j=1,\ldots,J$.
\end{proof}
As implied in Theorem~\ref{thm:trace}, we do not allow distinct hyperparameters for each $g_j$ estimation as the GCV calculation is based on the full model.  Instead, we assume the hyperparameters to be the same across all $g_j$ for $j=1,\ldots,J$.  We apply a grid search for the hyperparameter learning, which is to keep hyperparameters at certain levels, add critical subspaces based on this hyperparameter setting, and evaluate the GCV value for the resulting full model (after the termination of the search). The set of hyperparameters and the corresponding subspace-based model that minimize the GCV criterion in Eq.~\eqref{eq:gcv} will be chosen as an optimal model.

%After all critical subspaces have been selected, one subspace will be added to the model every time. After adding one subspace, the response will be updated to the residual of the model. When adding the first subspace, $\hat{\mathbf{y}} = \mathbf{S_1 y}$. When adding the second subspace, $ \mathbf{\widehat{y-\hat{y}}} = \mathbf{S_2^{'}(y-\hat{y})=S_2^{'}(I-S_1)y}$, therefore, $ \mathbf{S_2 = S_2^{'}(I-S_1)}$. According to induction, $\mathbf{ {S_j = S_j^{'}(I-S_1)(I-S_2^{'})...(I-S_{j-1}^{'})}} $. 
Because the GCV calculation based on the result of Theorem~\ref{thm:trace} is complicated due to the recursive format, we also consider a simpler version of the trace calculation.  For approximation, we use $\mathbf{S}_j^\prime$ instead of $\mathbf{S}_j$.  The following describes two alternatives we propose for the GCV calculation:
\begin{enumerate}
    \item[\textbf{A1}.] $\text{trace}(\mathbf{S}) = \sum_{j=1}^J{\text{trace}(\mathbf{S}_j)}$ based on Theorem~\ref{thm:trace}.
    \item[\textbf{A2}.] $\text{trace}(\mathbf{S}) = \sum_{j=1}^J{\text{trace}(\mathbf{S}_j^\prime)}$ for simplification.
\end{enumerate}
The performance of the two alternatives will be compared and discussed in Section~\ref{sc:result}.

%\[ \reallywidehat{ y-\hat{y}-\widehat{y-\hat{y}}}  \]
%Since the first 5-fold cross-validation is introduced, these steps above will be repeated five times. Finally, we get five sets of selected subspaces and test error. The common variables from five sets of selected subspaces are recognized as important variables. The final test error is the average of five test errors. 
%We will discuss GCV calculation in the next paragraph. 
%A linear fitting method is one for which we can write
%\begin{align}
%\hat{\textbf{y}} &= \mathbf{Sy}\\
%f_\lambda(x) &= \frac{1}{\lambda}\sum_{i=1}^{N} \alpha_i y_i K(x,x_i)  %p244 (7.50) %P 434 (12.34). 
%\end{align}

%In these two equations above, \(\alpha_i\) can be found in SVR model. \(K(x,x_i)\) is the inner product. It is noted that only one subspace is added to our model at a time and the response is updated to the model's residual each iteration. For every subspace model, the matrix is \(\textbf{S}_i\). For adding the first subspace, $\mathbf{y} = $   \(\hat{\bf{y}} = \bf{S_{1}y} \). For the second subsapce, \(\bf{\widehat{y-\hat{y}} =S_2(y-\hat{y}) =S_2(I-S_1)y }\). We can find the real matrix \(\bf{S_i}\) after building a subspace model in RStudio (by command model\$coefs). The \(trace(\bf{S})\) has the form,
%\begin{align}
%trace(\bf{S}) = \sum_{i=1}^{m} \textit{trace} (S_i)
%\end{align}
%where m is the number of subspaces.
%add S2 and S3 calculation. We assume Sj is less than 1. Sj' is approximately equal to Sj.

\subsection{Overall algorithm} %table and figure
%By introducing 5-fold cross validation, we split the whole data into 80\% of data (train and validation dataset) and 20\% of data (test dataset). For individual 80\% of data and 20\% of data, the following algorithm is utilized. The important variables are the common variables existing in selected subspaces from 5-fold cross-validation.
%The overall algorithm is shown in the following figure.
%\begin{figure}[h]
%\caption{Overall algorithm}
%\centering
%\includegraphics[width=0.8\textwidth]{Slide4.JPG}
%\includegraphics[width=0.8\textwidth]{Slide2.JPG}
%\end{figure}
\begin{algorithm}[t]
\SetAlgoLined
%\KwResult{Write here the result }
Apply 5-fold CV to split given data into training data ($\mathcal{T}_q$) and validation data ($\mathcal{V}_q$)\;
Build a constant model by using data in $\mathcal{T}_q$ and predict responses for the corresponding $\mathcal{V}_q$\;
Initialize $CV^*$ by setting it to the prediction error of the constant model (the average of 5 errors obtained from $\mathcal{V}_q$ for $q=1,\ldots,5$)\;
Create a grid for assessing different sets of hyperparameters\;

\For {each level combination of hyperparameters}{
Set the iteration number to 0, i.e., $t=0$\;
\Repeat{$\Delta e < \tau$}{
  $t\leftarrow t+1$\;
  Randomly draw a subspace\;
  Use $\mathcal{T}_q$ and build an SVR model by using the randomly drawn subspace\;
  Make a prediction of $\tilde{y}_{t,i}$ for those in the validation datasets, $\mathcal{V}_q$\;
  Calculate the prediction error as $\tilde{y}_{t,i} - \hat{g}_t(\mathbf{z}_{t,i})$ and accordingly $RMSE_{t,q}$ and $CV_t$\;
  \If{$\Delta e > \eta$}{
    $CV^* \leftarrow CV_t$\;
    $\mathcal{J}^* \leftarrow \mathcal{J}^* \cup \{t \}$\;
  }
}
Use the entire model learning data ($\mathcal{T}_q \cup \mathcal{V}_q$) to build a full model as in Eq.~\eqref{eq:model} by using the set of subspaces, $\mathcal{J}^*$, determined above\;
Calculate GCV (either A1 or A2)\;
}
Find the minimum GCV\;
Determine the optimal hyperpameters and finalize the full model specification\;
 \caption{Learning process of the randomized subspace modeling}
 \label{alg:overall}
\end{algorithm}

For clear illustration of the entire learning process, Alg.~\ref{alg:overall} shows the step-by-step procedures of the proposed subspace-based modeling.  To build a model, we use a dataset that is dedicated to model learning; if testing is needed (as in Section~\ref{sc:result}), we split the original dataset into two distinct datasets, one for model learning and another for testing.  From this model learning dataset, we apply 5-fold CV to split training data and validation data where validation data are required to determine the significance of subspaces and the convergence of the algorithm.  After the data split, we produce a model that includes a single constant term by using training data only and calculate the error of validation set prediction to initialize $CV^*$.  Each hyperparameter setting will be evaluated once a full model is established, i.e., once all critical subspaces are determined and added into the model.  For each fixed level set of hyperparameters, we generate a subspace randomly and build an SVR model to estimate $g_t$.  The quality of a subspace is evaluated via 5-fold CV with respect to $CV_t$ and $\Delta e$ in Eq.~\eqref{eq:cv} and \eqref{eq:epr}.  If a subspace meets the selection criterion, it will be added into the model.  This process of generating and evaluating a random subspace will continue until the termination criterion is met.  Once the algorithm terminates, the GCV value will be computed according to Eq.~\eqref{eq:gcv}.  After completing all the iterations for the grid search, we find the optimal hyperparameter values and apply this optimal setting to the entire model learning dataset (training and validation) to build a final model.  This final model will be evaluated by a separate test set for the comparison with other methods in Section~\ref{sc:result}.
%Once a critical subspace is added to the model, the response is updated to he model's residual. %This can provide a more accuracy model. After all critical subspaces are added to the model, GCV and prediction error is estimated. 

\section{Case Study}
\label{sc:result}
Design  and  analysis  of  layered  hybrid  structure  is  challenged  by  the  many  possible  choices  of materials and configurations.  This vast parameter space is impractical to explore through physical testing and is computationally prohibitive due to the analysis time required due to the large number and ranges of input parameters.  The randomized subspace approach is applied to overcome this limitation and provide an efficient and accurate approach to computationally characterize the parameter space and inform limited physical testing to define parameters. The reduced order model can then be used to rapidly explore the parameter space to customize and optimize layered designs. The specific problem of metal and composite layered structures was chosen due to its complexity and fit for the demonstration objectives.
 
To demonstrate the effectiveness of the proposed method, we compare its performance with other possible alternatives.  The methods subject to comparison include various wrapper methods and machine learning-based methods.  Some methods using a metaheuristic search are not considered here due to their high computational requirement.  Since all alternatives leverage a learning model for variable selection, we first compare the prediction accuracy of models obtained from the alternatives.  We also compare important variables selected by the alternatives.  To compare the results more effectively, 5-fold CV is used here; in specific, it is used for the comparison of prediction ability and the evaluation of important variable selection. The prediction error is measured by RMSE, and the average and standard deviation of 5 RMSE's are compared. Since the other methods focus on the importance of ``individual'' variables, we extract a common set of important variables obtained from the 5-fold CV-based implementation (even for the proposed method).  %the individual variable importance is compared among our method and certain popular machine learning methods. In each fold, important variables are captured. We use the common variables in five folds as the important variable of the data.

\subsection{Finite element analysis}
Numerical simulation using a validated finite element model (finite element analysis) is an efficient method to design and predict the damage tolerance of a layered hybrid structure for varying material properties.  However, given the nearly unlimited choices of material combinations and stacking sequences, it is not possible to explore every possible design and optimize for all possible material and configuration parameters. Therefore, identification of the
most influential parameters is necessary to limit the design and optimization parameter space to obtain a computationally tractable solution. The case study layered hybrid configuration is an aluminum plate with a co-cured bonded quasi-isotropic E-glass/epoxy composite overlay. The composite overlay consists of 8 layers of multiple lamina types (see Table \ref{table:stacking sequence}) resulting in a high dimensional number of material parameters needed for characterization and input into the finite element model. 

\begin{table}[t!]
\caption{Composite patch stacking sequence} 
\centering % used for centering table
\begin{tabular}{c c c} 
%\hline %inserts double horizontal lines
\toprule
 & E-glass fabric & Fabrication style \\ % inserts table
%heading
\midrule % inserts single horizontal line
1 & Hexcel 7781 & 0$^{\circ}$/90$^{\circ}$ Stain weave\\ 
2 & Vectorply E-BX 1200 & $\pm$ 45$^{\circ}$ Stitch  \\
3 & Vectorply E-LT 1800 & 0$^{\circ}$/90$^{\circ}$ Stitch \\
4 & Vectorply E-BX 1200 & $\pm$ 45$^{\circ}$ Stitch  \\
5 & Vectorply E-BX 1200 & $\pm$ 45$^{\circ}$ Stitch  \\ 
6&Vectorply E-LT 1800 & 0$^{\circ}$/90$^{\circ}$ Stitch \\
7& Vectorply E-BX 1200 & $\pm$ 45$^{\circ}$ Stitch  \\
8& Hexcel 7500 & 0$^{\circ}$/90$^{\circ}$ Plain weave\\

%\midrule
\bottomrule
\end{tabular}
\label{table:stacking sequence}
\end{table}

This layered multi-material model was loaded under four point bending to engage progressive damage in the structure through multiple damage mechanisms.  Damage
tolerance was evaluated by the total energy absorbed by the structure (an output parameter
that is calculated during analysis and readily extracted using an automated approach). This
model captures multiple, interacting damage mechanisms including the plastic deformation
in aluminum, shear plasticity in each lamina, the intralaminar fracture of each lamina,
delamination within the patch and disbond at the interface.
 Evaluation using this total damage energy provides a distinct and measurable result for the development of a reduced order predictive model and evaluation of influential parameters and their interactions.

A 3D high fidelity finite element model explicitly captures each layer in the hybrid structure as well as the interfaces \citep{heng2018prediction}.
Each fabric layer (lamina) is explicitly modeled, and cohesive elements are included between each layer to capture delamination between plies. Each lamina is individually modeled with continuum shell elements (SC8R). A cohesive damage model is implemented for each lamina using a VUMAT user subroutine. Cohesive elements with a triangular traction-separation law are used to detect the interlaminar damage and are also included at the metal/composite interface to capture disbond between the metal and resin. The aluminum substrate is modeled with solid elements (C3D8R). Loading and support pins are modeled as rigid bodies to create the boundary and loading conditions. The numerical simulations are executed in the FE code ABAQUS. This physics-based model was validated under four point bend loading through physical testing (Fig.~\ref{fig:validation}) .

\begin{figure}[h!]
\centering
\includegraphics[width=.6\textwidth]{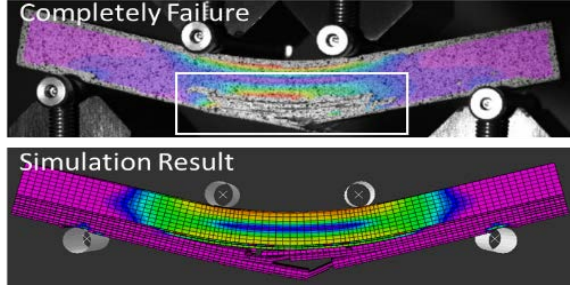}
\caption{Comparison of experimental (top) and numerical results (bottom) at failure under four point bending}
\label{fig:validation}
\end{figure}

To predict the total energy absorption of the layered hybrid structure, the 41 parameters characterizing the material properties needed for input into the finite element model for the multiple layers
are used as the predictors (see Table~\ref{table:realvariable} for detail). Latin hypercube sampling is applied to sample the parameter space based on the mean and standard deviation of the parameter ranges. For this case study, completing a single analysis generating a single data record takes an average of 3 hours depending on the parameter combination. Due to the
computational time required to analyze the finite element model, we conducted
200 analyses producing a predictor matrix of size 200$\times$41 and a response vector
of size 200.  With more analyses, it is possible to generate a larger dataset, and this can further improve the prediction quality.  However, from a practical perspective, we aim to develop a method that works well even with a small dataset, so we compare alternatives based on this small dataset.

\begin{table}[!htbp]
\caption{Variables for modeling damage tolerance of a hybrid metal structure (the Resin in this table is ``Resin (M1002 with M2046 Hardener)'') \citep{heng2018prediction} } % title of Table
\centering 
\resizebox{\textwidth}{!}{
\begin{tabular}{lll} 
\toprule % inserts single horizontal line
Variable &Corresponding parameter \\
\midrule
$x_1$ ($\sigma_y$) &Yield Stress of Al-5456  \\ 
$x_2$ ($n$) & Strain Hardening Exponent of Al-5456 \\
$x_3$ ($\sigma_{ss}$)&Nominal Stress First/Second Direction of Resin between Laminate plies  \\
$x_4$ ($E_{Al}$)& Young's Modulus  of Al-5456    \\
$x_5$ ($P$)& Power Term in Shear Hardening Equation for Laminates  \\ 
$x_6$ ($\sigma_{nn}$)& Nominal Stress Normal-only Mode of Resin between Laminate plies  \\
$x_7$ ($X_{7781}$)& Tensile strength of the Laminae Reinforced with Hexcel 7781\\
$x_8$ ($G_{1200}$)& Intralaminar Fracture Toughness of Laminae Reinforced with EBX 1200  \\ 
$x_9$ ($\epsilon^{pl}_{max}$)& Maximum Shear Plastic Strain for Laminates  \\
$x_{10}$ ($G_{II}$)& Shear Mode Fracture Energy First/Second Direction of Resin  between Laminate plies  \\
$x_{11}$ ($\alpha_{12}$)& Shear Damage Parameter for Laminates\\
$x_{12}$ ($E_{1800}$) & Young’s Modulus of Laminae Reinforced with ELT 1800 \\
$x_{13}$ ($G_I$)& Normal Mode Fracture Energy of Resin between Laminate plies\\
$x_{14}$ ($X_{7500}$)& Tensile strength of the Laminae Reinforced with Hexcel 7500   \\
$x_{15}$ ($\sigma_{nni}$)& Nominal Stress Normal-only Mode of Resin between metal/composite interface \\
$x_{16}$ ($v_{Al}$)& Poisson’s Ratio  of Al-5456     \\
$x_{17}$ ($E_{7500}$)& Young’s Modulus of Laminae Reinforced with Hexcel 7500 \\
$x_{18}$ ($\sigma_{ssi}$)& Nominal Stress First/Second Direction of Resin between metal/composite interface \\
$x_{19}$ ($X_{1800}$)& Tensile strength of the Laminae Reinforced with ELT 1800    \\
$x_{20}$ ($B-K_i$)& Mixed Mode Behavior for Benzeggagh-Kenane of Resin \\
& between metal/composite interface \\
$x_{21}$ ($E_{1200}$)& Young’s Modulus of Laminae Reinforced with EBX 1200 \\
$x_{22}$ ($G_{1800}$)& Intralaminar Fracture Toughness of Laminae Reinforced with ELT 1800    \\
$x_{23}$ ($\tilde{\sigma}_y$)& Effective Shear Yield Stress  for Laminates\\
$x_{24}$ ($X_{12}$)&Shear Strength of Resin  for all Lamina \\
$x_{25}$ ($B$)& Strength Coefficient of Al-5456  \\
$x_{26}$ ($v_{7500}$)& Poisson Ratio of Laminae Reinforced with Hexcel 7500 \\ 
$x_{27}$ ($X_{1200}$)& Tensile strength of the Laminae Reinforced with EBX 1200 \\
$x_{28}$ ($v_{1200}$)& Poisson Ratio of Laminae Reinforced with EBX 1200  \\
$x_{29}$ ($v_{1800}$)& Poisson Ratio of Laminae Reinforced with ELT 1800  \\
$x_{30}$ ($G_{7500}$)& Intralaminar Fracture Toughness of Laminae Reinforced with Hexcel 7500  \\
$x_{31}$ ($E_{7781}$)& Young’s Modulus of Laminae Reinforced with Hexcel 7781  \\
$x_{32}$ ($v_{7781}$)& Poisson Ratio of Laminae Reinforced with Hexcel 7781  \\
$x_{33}$ ($G_{7781}$)& Intralaminar Fracture Toughness of Laminae Reinforced with Hexcel 7781  \\
$x_{34}$ ($G_{12}$)& Shear Modulus of Laminate for Laminates  \\
$x_{35}$ ($d^{max}_{12}$)& Maximum Shear Damage for Laminates \\
$x_{36}$ ($C$)& Coefficient in Shear Hardening Equation for Laminates \\
$x_{37}$ ($E_{nn}$)& Elastic Modulus of Resin between Laminate plies  \\
$x_{38}$ ($B-K$)& Mixed Mode Behavior for Benzeggagh-Kenane of Resin between Laminate plies  \\
$x_{39}$ ($E_{nni}$)& Elastic Modulus of Resin between metal/composite interface  \\
$x_{40}$ ($G_{Ii}$)& Normal Mode Fracture Energy of Resin between metal/composite interface  \\
$x_{41}$ ($G_{IIi}$)& Shear Mode Fracture Energy First/Second Direction of \\
&Resin  between metal/composite interface  \\
\bottomrule
\end{tabular}}
\label{table:realvariable}
\end{table}

\subsection{Implementation detail}
By design, the randomized subspace modeling includes a 5-fold CV for the selection of critical subspaces.  To compare performance with other methods, another layer of a 5-fold CV is implemented to have separate test datasets.  The similar treatment is applied to other methods being compared.  One layer of a 5-fold CV is used to split test datasets, and another layer of a 5-fold CV is used to learn hyperparameters, if applicable.  The other methods subject to comparison include regression models based on linear regression, lasso regression, ridge regression, principal component regression (PCR), partial least squares regression (PLS), RF, $k$-nearest neighbors ($k$-NN), SVR, and NN to cover various linear and nonlinear modeling commonly used for feature selection.  Before applying any method, we normalize the data to prevent any scale-relevant errors.

To implement the randomized subspace modeling, we set the dimension of subspaces to three, i.e., $k=3$, and the selection and termination thresholds to \(\eta = 1\%\) and \(\tau= 0.001\%\), respectively, based on the results of preliminary studies.  For the base learner of the subspace-based modeling (SVR), a kernel needs to be set among multiple possible options, including linear, polynomial (POLY), radial basis function (RBF), and sigmoid kernels.  From preliminary experiments, we found kernels capturing linearity, such as linear and POLY kernels, provided better performance.  In this regard, we choose POLY kernel, \(K(\mathbf{u}, \mathbf{v}) = (\gamma \mathbf{u}^T \mathbf{v} + \delta)^{d}\) where $\gamma$, $\delta$, and $d$, respectively, denote the scale, offset, and degree of polynomials, and set $\gamma=1/3$, $\delta=0$, and $d=1$ (found optimal). Since $\gamma$ had little impact on the final prediction error, we used the default value available in an R function for SVR fitting. Considering that we use normalized dataset, $\delta=0$ providing the optimal performance is not a surprise.  Although $d=1$ produces the optimal result, the POLY kernel was chosen over the linear kernel as it is more flexible and capable of modeling nonlinearity with different parameterization.
On the other hand, SVR learning itself involves some parameters, $\epsilon$ and $C := 1/\lambda$.  We test three levels for each of the two parameters, i.e., $\epsilon = 0.01, 0.1, 0.5$ and $C= 1, 2, 5$, which generates nine distinct combinations of parameter setting among which an optimal setting is determined.  When applying a grid search for this hyperparameter learning, if there exists a poor level combination, it is possible that the naive termination criterion does not work properly.  To prevent iterating more than what is required for the exhaustive search, we force the search process to stop after 10000 iterations.  For this case study, this hard thresholding was active only for a single level combination out of nine, so the termination criterion generally works well.

Excluding NN and the proposed method, all methods are executed in R using \texttt{train()} function in the \texttt{caret} package.  This package contains effective tools for data splitting, pre-processing, feature selection, model tuning with resampling and variable importance estimation as well as other functionality \citep{chen2020selecting}.  \citet{chen2020selecting} showed RF method with \texttt{varImp()} was an efficient tool for calculating variable importance, so we use the same function to measure variable importance for other methods.  On the other hand, NN is implemented in Python using  ``keras'' package.  To make the result comparable to the existing variable selection methods based on NN \citep{olden2004accurate,liu2019variable}, we test over multiple NN structures, one or two hidden layer(s), and three activation functions of linear, sigmoid, and ReLU.  We also evaluate various hyperparameter options including the numbers of neurons, batch size, and epochs.  With all this variation, we pick an NN model with the lowest prediction error for the result comparison.  For NN, variable importance is determined by the CWA used in \citet{olden2004accurate}.  Since all the methods but the randomized subspace modeling calculate variable importance rather than picking important variables, for these methods, we extract the first twenty important variables (based on the importance score) from each of the five-fold learning and find the common variables from the list of all selected variables to report critical individual variables.

\subsection{Comparison results}\label{ssc:result}
The optimal hyperparameter values of the proposed method vary a little with each model learning set (five of such).  Table~\ref{tab:opt_par} shows the optimal values selected based on the two GCV criterion suggested in Section~\ref{ssc:hyppar}.  From the table, it is shown, in general, $\epsilon=0.01$ and $C=5$ are found optimal.
\begin{table}[h!]
    \centering
    \caption{Optimal parameters selected for each of the five model learning/test data split}
    \begin{tabular}{lcccccc}
    \toprule
         && Learn/test 1 & Learn/test 2
         & Learn/test 3 & Learn/test 4
         & Learn/test 5
         \\
         \midrule
         \multirow{2}{*}{A1} & $\epsilon$ &0.01 &0.01 &0.01 &0.01 &0.01  \\
         & $C$ &5 &5&5&5&2\\
         \midrule
         \multirow{2}{*}{A2} & $\epsilon$ &0.1&0.01&0.01&0.01&0.01 \\
         & $C$ &5 &5&5&5&2\\
         \bottomrule
    \end{tabular}
    \label{tab:opt_par}
\end{table}

The optimal models based on the result in Table~\ref{tab:opt_par} are compared with other alternatives as shown in Table~\ref{table:linear}.  It is noticed that, in general, linear models perform better than nonlinear models for the material damage tolerance prediction.  This applies to both the average and standard deviation of the prediction errors.  Among the linear methods, partial least squares regression produces the lowest average prediction error while the regularized methods (lasso and ridge) show comparably decent performance.  Although linear regression has the smallest variance of the estimator, its average prediction error is relatively higher than the other methods.  For nonlinear methods, only SVM with a polynomial kernel and NN provide an average RMSE comparable to that of the linear counterpart.  The two methods also show a similar level of the robustness of the estimator (through standard deviation).  The proposed randomized subspace modeling achieves the best performance among all the methods.  The model based on A1 GCV criterion is comparable to the best linear methods, and the model based on A2 outperforms all other methods with respect to the average and standard deviation of the RMSE's. This indicates that the proposed methods provide a model that is not only more accurate but also more robust and stable while reducing uncertainty. %When applied our approach, it provides a more accurate model with stronger robustness. In addition, by introducing subspace-based modeling, it can reduce the dimensionality of the data, mitigating the curse of dimensionality.

\begin{table}[h!]
\caption{Comparison of the average and standard deviation of RMSE's from 5-fold evaluation} % title of Table
\centering % used for centering table
\begin{tabular}{c c c} % centered columns (4 columns)
%\hline %inserts double horizontal lines
\toprule
Method & Average & Standard deviation \\ % inserts table
%heading
\midrule % inserts single horizontal line
Linear regression & 12.61 & 1.05  \\ % inserting body of the table
Lasso regression & 12.02 & 1.45  \\
Ridge regression & 11.98 & 1.20  \\
Principal component regression & 12.21 & 1.64  \\
Partial least squares regression & 11.89 & 1.15  \\ %[1ex] % [1ex] adds vertical space
\midrule %inserts single line
Random forest & 14.70 & 3.00  \\ % inserting body of the table
$k$-nearest neighbors & 15.27 & 3.51  \\
%Generalized additive model & 17.65 & 3.76  \\
SVM (RBF) & 17.49 & 3.77  \\
SVM (POLY) & 12.01 & 1.20  \\
Neural network &12.45&1.13\\
\midrule
Subspace-SVM (A1) & 11.99 & 1.12\\
\textbf{Subspace-SVM (A2)} & \textbf{11.64} &\textbf{0.94}\\
\bottomrule
\end{tabular}
\label{table:linear}
\end{table}

Although A1 applies the exact calculation of the GCV criterion, it is still based on the assumption of a squared loss function.  As such, A1 is also an approximation of the exact GCV criterion.  Since $\epsilon$-insensitive loss function used in Eq.~\eqref{eq:obj} penalizes errors out of the $\epsilon$-tube proportionally to the distance from the tube, the exact calculation for a squared loss function that more harshly penalizes larger errors may not guarantee the optimal performance.  In A2, simplifying the hat matrix, i.e., dropping the recursive term of $(\mathbf{I} - \sum_{l=0}^{j-1}{\mathbf{S}_l})$, can avoid dependency on other $\mathbf{S}_j$'s, so that it can reduce the variance of the estimator preventing potential overfitting and hence producing better results.  In addition, from the computational perspective, A2 does not require storing and adding all the previous $\mathbf{S}_j$'s but just requires calculating the trace of $\mathbf{S}_j^\prime$ for each subspace model estimation, so it is computationally more efficient than A1.

By construction, the randomized subspace model extracts critical subspaces.  However, none of the other methods has this capability.  To assess the variable selection capability of the randomized subspace method, we instead compare which individual variables are selected as an important predictor.  As illustrated in Fig.~\ref{fg:indvar}, we select common variables included in all five sets of critical subspaces from the 5-fold evaluation. For all other methods, we first find a set of important individual variables from each of the 5-fold evaluation and select the common variables from the sets for the consistency of the evaluation procedure.

%\subsection{Important variable selection}
\begin{figure}[h!]
\centering
\includegraphics[width=\textwidth]{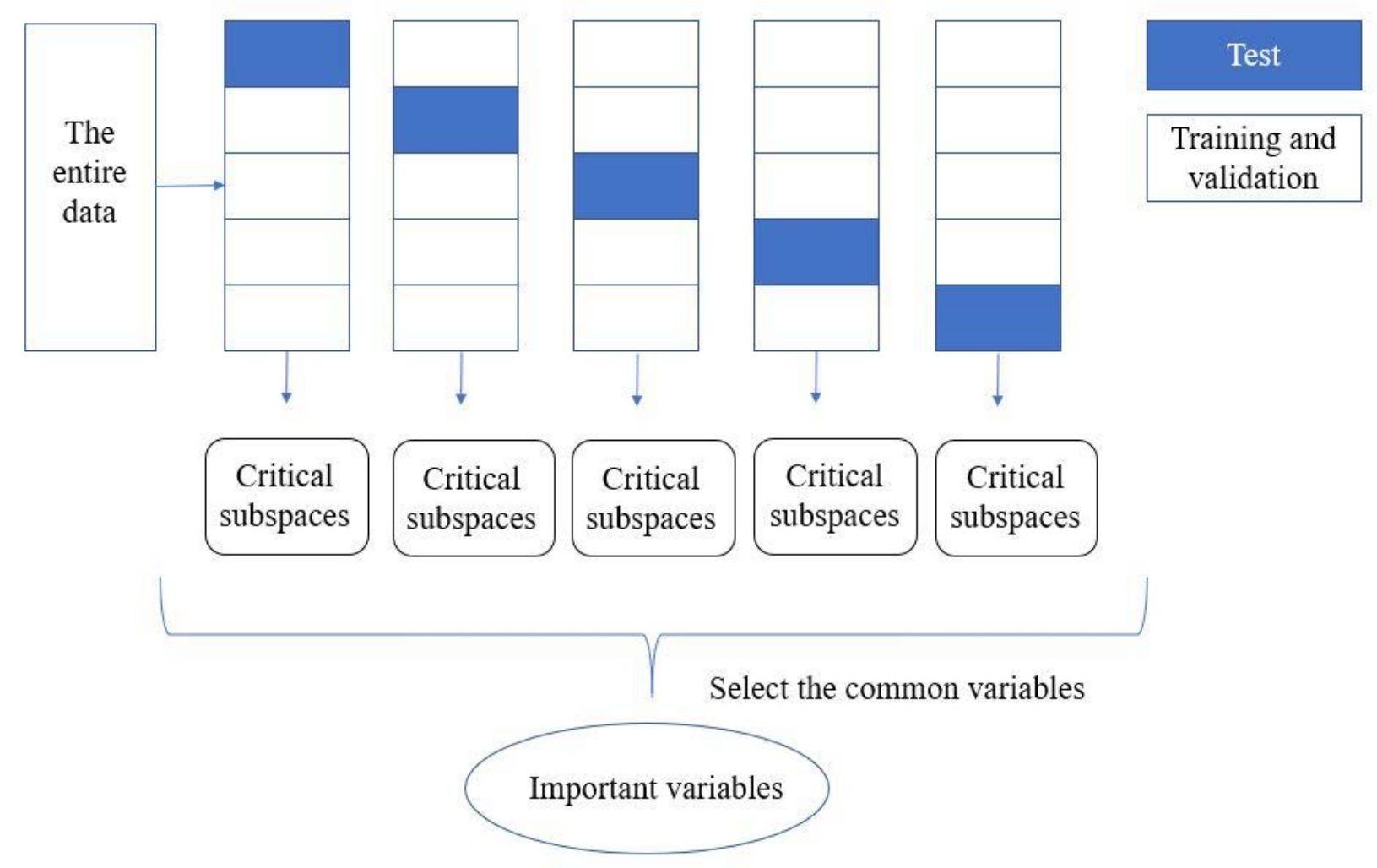}
\caption{Selection procedure of important individual variables}
\label{fg:indvar}
\end{figure}

\begin{table}[hbt!]
\caption{Important variables selected by all alternatives} 
\centering % used for centering table
\resizebox{\textwidth}{!}{
\begin{tabular}{c|ccccccccccccccccccccccc } 
\toprule
Method & $x_1$ & $x_2$ & $x_3$& $x_4$& $x_5$& $x_6$& $x_7$& $x_8$& $x_9$& $x_{10}$ & $x_{11}$ & $x_{12}$ &$x_{13}$ &$x_{14}$& $x_{15}$ &$x_{16}$ &$x_{17}$& $x_{18}$ \\ % inserts table
\midrule % inserts single horizontal line
Linear &\checkmark &\checkmark&\checkmark&\checkmark&\checkmark &\checkmark&\checkmark&\checkmark&\checkmark&\checkmark&&&&&&&&   \\ 
Lasso &\checkmark&\checkmark&\checkmark&\checkmark&\checkmark&\checkmark&\checkmark&\checkmark&&\checkmark&\checkmark&\checkmark&&&&&&  \\
Ridge&\checkmark&\checkmark&\checkmark&\checkmark&\checkmark &\checkmark&\checkmark &\checkmark&&&\checkmark& &\checkmark &\checkmark&&&&\\
PCR&\checkmark&\checkmark&\checkmark&\checkmark & & &\checkmark&&&&&\checkmark&\checkmark &\checkmark&\checkmark &\checkmark&& \\
PLS&\checkmark&\checkmark&\checkmark&\checkmark&\checkmark& &\checkmark&\checkmark&&\checkmark&&\checkmark&&&&&&\checkmark \\
RF&\checkmark&\checkmark&\checkmark&\checkmark&\checkmark&&\checkmark&\checkmark&&&&&\checkmark&&&\checkmark&\checkmark&\\
$k$-NN&\checkmark&\checkmark&\checkmark&\checkmark&&&\checkmark&\checkmark&&&&\checkmark&\checkmark&\checkmark&\checkmark&&\checkmark&\\
SVM (RBF)&\checkmark&\checkmark&\checkmark&\checkmark&&&\checkmark&\checkmark&&&&\checkmark&\checkmark&\checkmark&\checkmark&&\checkmark&\\
SVM(POLY)&\checkmark&\checkmark&\checkmark&\checkmark&&&\checkmark&\checkmark&&&&\checkmark&\checkmark&\checkmark&\checkmark&&\checkmark&\\
NN&\checkmark&\checkmark&&\checkmark&\checkmark&\checkmark&\checkmark&\checkmark&&&&\checkmark&&&&&&\\
Our method&\checkmark&\checkmark&\checkmark&\checkmark&\checkmark&\checkmark&\checkmark&\checkmark&&\checkmark&&\checkmark&&&&&&\\
\bottomrule
\end{tabular}}
\label{table:var} % is used to refer this table in the text
\end{table}

The result of important individual variable selection is shown in Table~\ref{table:var}.  In the table, certain variables are identified as important variables by all methods while some are selected by only a few methods.  For illustration purpose, we exclude all other variables that are not selected by any of the methods.  According to the variable selection result, PCR and NN missed a variable ($x_8$ and $x_3$, respectively) that is chosen by all others.  On the other hand, our method captures all variables that are found important by all the other methods, such as $x_1$, $x_2$, $x_4$ and $x_7$ as well as $x_8$ and $x_3$.  Variables identified by the majority of the other methods (e.g., $x_5$ and $x_{12}$) are marked as important in our method.  Certain variables rarely identified by the others (e.g., $x_9$ and $x_{11}$) are not included in the set of important variables for our method.  In a few cases, our method selects a variable not chosen by the majority (such as $x_6$ and $x_{10}$) and does not select a variable voted by the majority ($x_{13}$ and $x_{14}$).  Note here that, however, according to Table~\ref{table:linear}, the prediction quality of RF, $k$-NN, and SVM with an RBF kernel are not quite acceptable.  By excluding the variable selection results from these methods, our method is, in fact, capable of determining important and unimportant variables of which result well aligns with the results of the remaining methods.  In summary, this comparison result demonstrates that our method effectively extracts important variables.

Besides the popular machine learning methods, we also compare our variable selection result with that of the elementary effects method \citep{heng2018prediction} in Table~\ref{table:ifa}.  The result of the elementary effects method was derived from a dataset generated by one-factor-at-a-time approach, and the method did not extract common variables from 5-fold CV and did not record the top 20 important variables (instead, reported the top 10 important variables).  Albeit this is not a valid methodology comparison as there are other aforementioned factors affecting the variable selection, we aim to compare our finding to the existing variable selection for this specific problem.  In Table~\ref{table:ifa}, there are four common variables selected by our method and the elementary effects method. The elementary effects method does not mark some variables found important by the majority of the machine learning-based methods, including $x_2$, $x_3$, $x_7$, and $x_8$, and it identifies some variables that are never selected by others, such as $x_{24}$, $x_{31}$, $x_{33}$, and $x_{41}$.  Without knowing the true importance of the variables, we cannot draw a solid conclusion, but our methods result well aligns with the result of other alternatives whereas the existing selection from the elementary effects method is a bit far from the majority vote.  Our method adds $x_2$, $x_3$, $x_5$, $x_6$, $x_7$, and $x_8$ as important variables and drops $x_{11}$, $x_{17}$, $x_{24}$, $x_{31}$, $x_{33}$, and $x_{41}$ to and from the existing variable selection.

\begin{table}[hbt!]
\caption{Important variables selected by elementary effects method and our method} 
\centering % used for centering table
\resizebox{\textwidth}{!}{
\begin{tabular}{c|ccccccccccccccccccc} 
\toprule
Method & $x_1$ & $x_2$ & $x_3$& $x_4$& $x_5$& $x_6$& $x_7$& $x_8$& $x_9$& $x_{10}$ & $x_{11}$ & $x_{12}$ &$x_{17}$ &$x_{24}$ & $x_{31}$ &$x_{33}$ &$x_{41}$\\ % inserts table
\midrule % inserts single horizontal line
Elementary effects method &\checkmark&&&\checkmark&&& &&&\checkmark&\checkmark&\checkmark&\checkmark &\checkmark&\checkmark&\checkmark&\checkmark\\

Our method&\checkmark&\checkmark&\checkmark&\checkmark&\checkmark&\checkmark&\checkmark&\checkmark&&\checkmark&&\checkmark&&&&&&\\
\bottomrule
\end{tabular}}
\label{table:ifa} % is used to refer this table in the text
\end{table}

\subsection {Critical Subspace Analysis}
One of the major novelty of the randomized subspace modeling is the capability of extracting critical subspaces, equivalently, identifying important physical variables and important interactions among the variables.  In Section~\ref{ssc:result}, the comparison result demonstrates the quality of the important variable selection of the proposed method.  Note here that the result of the important (individual) variable selection was derived from critical subspaces identified in each of model learning/testing data splits.  This ensures the quality of the critical subspace selection to some extent.

To determine critical subspaces for the given dataset, we apply the randomized subspace modeling to the entire data, i.e., using all 200 observations, without any data split to learn the model.  The final model is constructed by 8 distinct critical subspaces, as shown in Table~\ref{table:subspace}.  We observe that these subspaces contain most of the important individual variables found by our method in Table~\ref{table:var}, except $x_5$, $x_6$, and $x_{12}$, while each subspace includes at least one or two important individual variable(s).  Assuming that an interaction formed by individually important variables is more significant than an interaction between others, we can at least conclude that the interactions between $x_2$ and $x_4$, between $x_4$ and $x_{10}$, and between $x_8$ and $x_{10}$ are significant.  

There are some variables that are not identified as important individual variables but included in the critical subspaces.  These variables by themselves may not have significant impact on the damage tolerance, but their interactions with others could influence the response.  Still, in its current form, we cannot validate if any combination between all variables in a subspace is significant.  We will investigate this aspect as one of future study.  For a reference, if only one variable is added to the model at a time ($k=1$), the prediction error is 12.28 with the standard error of 0.84.  By ignoring the interactions formed by multiple variables, the prediction capability becomes weaker.  This supports, at least to the minimum extent, that some of the interactions we modeled through subspaces improve the prediction, and hence they are considered significant.
%\end{itemize}
\begin{table}[t!]
\caption{Critical subspace selected by our method} 
\centering % used for centering table
\begin{tabular}{c c c} 
\toprule
Critical subspace & Important individual variables\\ 
\midrule % inserts single horizontal line
$x_{19}, x_7, x_{20}$ & $x_7$   \\
$x_3, x_{15}, x_{21}$ & $x_3$  \\
$x_{22}, x_1, x_{16}$ & $x_1$  \\
$x_{23}, x_4, x_{22}$ & $x_4$  \\
$x_{15}, x_4, x_{10}$ & $x_4$, $x_{10}$  \\
$x_{24}, x_3, x_{25}$ & $x_3$  \\
$x_4, x_{17}, x_2$ & $x_4, x_2$ \\
$x_8, x_{10}, x_{26}$ & $x_8$, $x_{10}$ \\
\bottomrule
\end{tabular}
\label{table:subspace}
\end{table}

\section{Conclusion}
\label{sc:conc}
In this paper, we propose the randomized subspace modeling to alleviate challenges in analyzing high-dimensional datasets and provide valuable insight about an underlying system by identifying important physical variables and interactions. The proposed method leverages an ensemble of multiple models derived from critical subspaces, reduced-dimensional physical spaces. The critical subspaces are generated by a randomized search and evaluated by a cross-validated selection criterion.  With this structure, the proposed method shows its superiority over other alternatives in modeling a regression problem and identifying important variables.  Specifically, for a high-dimensional analysis of hybrid material's damage tolerance, we can draw the following conclusions:
\begin{enumerate}
    \item Compared to the models commonly used in the literature of variable selection (e.g., RF, NN, and Lasso), our method produces the lowest prediction error for this high dimensional, complex problem, demonstrating that our method is competitive for predicting high-dimensional datasets.  In addition, the lowest standard error shows that our method produces robust prediction.
    \item The result of our method's important variable selection well aligns with the majority of other alternatives, verifying our method's variable selection capability.  More importantly, our method identifies critical subspaces capturing not only important physical variables but also significant interactions, which is beneficial to experimental designs for broad engineering problems.
\end{enumerate}

In the future, we plan to improve the randomized search process.  This includes a weighted randomized search leveraging variable importance priors obtained from domain knowledge and a multi-step search process that first learns the rankings of variable importance and searches subspaces based on the rankings. Improving the randomized search can avoid adding insignificant variables as a part of the final subspace selection. However, the inherent randomness can still involve an insignificant variable in a subspace.  As such, we will also study how to identify and drop insignificant variables from the chosen subspaces.

%\vskip 0.2in
\bibliographystyle{plainnat}
%\bibliography{sample}
\bibliography{ref}

\end{document}